\newcommand{\Tr}{\operatornamewithlimits{Tr}}
\newcommand{\dom}{\operatornamewithlimits{dom}}
\newcommand{\Span}{\operatornamewithlimits{Span}}
\newtheorem{theorem}{Theorem}[section]
\newtheorem{proposition}[theorem]{Proposition}
\newtheorem{definition}[theorem]{Definition}
\newtheorem{assumption}[theorem]{Assumption}
\newtheorem{example}[theorem]{Example}
\numberwithin{equation}{section}
\title{Optimal portfolios in commodity futures markets}
\author{Fred Espen Benth}
\author{Jukka Lempa}
\thanks{\emph{Address.} Centre of Mathematics for Applications, University of Oslo, PO Box 1053 Blindern, NO -- 0316 Oslo, e-mail: \texttt{fredb@math.uio.no}, \texttt{jlempa@cma.uio.no}}
\keywords{futures contract, commodity markets, portfolio optimization, stochastic partial differential equations, finite-dimensional realization, invariant foliation}
\subjclass[2010]{91G10, 60H15} 
\begin{document}

\maketitle

\begin{abstract}
We consider portfolio optimization in futures markets. We model the entire futures price curve at once as a solution of a stochastic partial differential equation.  The agents objective is to maximize her utility from the final wealth when investing in futures contracts. We study a class of futures price curve models which admit a finite-dimensional realization. Using this, we recast the portfolio optimization problem as a finite-dimensional control problem and study its solvability.
\end{abstract}

\section{Introduction}


Futures contracts, see, e.g., \cite{Cox Ingersoll Ross 81}, form an important class of financial instruments exchanged on commodity markets. These contracts convey the right to purchase or sell a specified quantity of the commodity at a fixed price on a fixed future date. This right has to be exercised if the contract is held until the maturity. The price fixed in contract is called the futures price and it is set such that no money is paid upfront, i.e., the initial value of a futures contract is zero. However, a futures contract yields a cash flow during its life time generated by the changes in futures price over time. More precisely, the party in whose favor the futures price change occurs must immediately be paid the full amount of the change by the losing party. Typically, a margin account is set up for this purpose.


The objective of this paper is to study portfolio optimization on commodity futures markets using tools from mathematical finance and stochastic calculus. We assume that the futures prices are settled continuously in time and that there is a liquid market for futures contracts for every time-to-delivery $y>0$. Now we are facing a portfolio optimization problem in infinite dimensions, since we have a stochastic variable, that is, a futures price, for a continuum of times to delivery. To tackle this problem, we start with reparametrized futures price dynamics \'{a}-la Musiela which allows to formulate the dynamics as a solution of a stochastic partial differential equation. This equation can be regarded as a stochastic differential equation taking values in a real separable Hilbert space.  After this, we focus on futures prices models which admit a finite-dimensional realization. Roughly speaking, for a given instantaneous futures price model $t\mapsto f_t$, we look for a sufficiently regular function $\phi:\mathbf{R}_+\times \mathbf{R}^d\rightarrow \mathbf{R}$ and a $d$-dimensional diffusion $Z$ such that $f_t(y)=\phi(y,Z_t)$ for all $t\geq0$. Here, $d$ is the dimension of the realization. Various classes of models admitting finite-dimensional realization have been studied over the recent years, mostly in connection with term structure models of interest rates. In a series of papers including \cite{Bj�rk Gombani, Bj�rk Svensson, Bj�rk Landen, Bj�rk Blix Landen} the authors study geometric aspects and finite-dimensional realizations of interest rate models using differential geometry and systems and control theoretic methods. In another series of papers including \cite{Filipovic Teichmann 1, Filipovic Teichmann 2}, see also \cite{Filipovic}, the authors analyze the properties of finite-dimensional realizations using the so-called convenient analysis on Fr\'{e}chet space. For example, they prove that the function $\phi$ introduced above is necessary affine. In these papers, they provide a general theory on invariance of manifolds for infinite-dimensional stochastic equations and study the term structure models of interest rates as an application. While being mathematically very general, their analysis is quite demanding to the reader. Recently, a more direct approach to affine realizations of term structure models was introduced in \cite{Tappe} -- we follow this approach when studying our futures price model. We also refer to \cite{Ekeland Taflin} and \cite{Ringer Tehranchi}, where portfolio optimization with an infinite dimensional state variable is studied. These studies are concerned with optimal bond portfolios where the underlying interest rate term structure follows a solution of a stochastic partial differential equation. In addition to addressing a whole different application, these studies operate on the infinite dimensional level, whereas we study models with a finite-dimensional realization.

Our paper makes two main contributions. First, we provide a general mathematical framework to treat futures portfolio optimization using the up-to-date theory of the realizations of stochastic partial differential equations. More precisely, we provide conditions under which a given infinite-dimensional portfolio optimization problem can be solved in terms of a finite-dimensional control problem. Here, the finite-dimensional realization (more precisely, the driving coordinate process $Z$) of the futures price curve $t\mapsto f_t$ plays a key role. Furthermore, we discuss an economic interpretation of the coordinate process and how a solution of the finite-dimensional control problem can be connected to the coodinate process and, consequently, back to the infinite-dimensional portfolio problem. Second, we extend some of the results in \cite{Tappe} to a new class of stochastic differential equations. In \cite{Tappe}, the author considers finite-dimensional realizations for HJM term structure models, see \cite{Heath Jarrow Morton}. This leads to the well known HJM drift condition for the underlying dynamics whereas we work with a different drift condition resulting into a different SPDE.

The reminder of the paper is organized as follows. In Section 2 we set up a general framework for modeling of the futures curve. In Section 3 we analyze the case of finite-dimensional realization. The portfolio problem is recast as a finite-dimensional control problem in Section 4.

\section{The Portfolio Problem}

\subsection{Futures Price Dynamics} We start the analysis by defining the function space on which the futures price curve $t\mapsto f_t$ evolves. We use the Musiela parametrization and write the futures curve as a function of time-to-maturity $y\in\mathbf{R}_+$. Following \cite{Filipovic}, see also \cite{Tappe}, we fix a parameter $\alpha>0$ and denote as $H_\alpha$ the space of all absolutely continuous functions $h:\mathbf{R}_+\rightarrow\mathbf{R}$ such that
\[ \|h\|_\alpha:=\left(|h(0)|^2+\int_0^\infty e^{\alpha y} |h'(y)|^2 dy \right)^{\frac{1}{2}}<\infty. \]
Here, the derivative $h'$ is understood in the weak sense. The space $(H_\alpha,\|\cdot\|_\alpha)$ is a separable Hilbert space for which the point evaluation $h\mapsto \delta_y(h): H_\alpha\rightarrow\mathbf{R}$ is a continuous linear functional for each $y\in\mathbf{R}_+$ -- see, e.g., \cite{Tappe}, Theorem 4.1. We observe from the definition of the norm $\|\cdot\|_\alpha$ that the derivative of the functions in $H_\alpha$ decay at exponential rate whereas the actual futures price can very well be non-zero for large times to maturity.

We denote as $(S_t)_{t\geq0}$ the semigroup of right shifts defined as $S_t f(y)=f(y+t)$ for any given $f\in H_\alpha$. Furthermore, denote the differential operator $\frac{\partial}{\partial y}$ as $A$. Then we know that the semigroup $(S_t)_{t\geq0}$ is strongly continuous with infinitesimal generator $A$, see \cite{Tappe}, Thrm. 4.1. The domain of operator $A$ is defined as $\mathcal{D}(A)=\{ h\in H_\alpha \, | \, h'\in H_\alpha \}$. Furthermore, we define domains $\mathcal{D}(A^n):=\left\{ h\in\mathcal{D}(A^{n-1}) \, | \, A^{n-1} h \in \mathcal{D}(A^{n-1}) \right\}$, along with the intersection $\mathcal{D}(A^\infty)=\bigcap_{i=1}^\infty \mathcal{D}(A^i)$. Finally, an element $h\in\mathcal{D}(A^\infty)$ is called \emph{quasi-exponential} if the linear space spanned by the family $\{A^n h\}_{n=1}^\infty$ is finite dimensional.

Let $U$ be a real separable Hilbert space. We assume that $\mathcal{W}$ be a Wiener process defined on a complete filtered probability space $(\Omega,\mathcal{F},\mathbf{F},\mathbf{Q})$, where the filtration $\mathbf{F}=\{\mathcal{F}_t\}_{t\geq0}$ satisfies the usual conditions, and taking values in $U$. Following the HJM-approach, we assume that $\mathbf{Q}$ is a (local) martingale measure. Denote the covariance operator of $\mathcal{W}$ as $Q$ and the associated eigenvectors and -values as $\{e_i\}_{i\in\mathcal{I}}$ and $\{\lambda_i\}_{i\in\mathcal{I}}$, respectively. Here, $\mathcal{I}$ is a countable index set determined by the dimensionality of $U$.  We assume that $\Tr Q=\sum \lambda_i<\infty$. The family $\{e_i\}_{i\in\mathcal{I}}$ is an orthonormal basis for the space $U$ and we can express $\mathcal{W}$ as
\begin{equation}\label{U-valued Wiener process representation}
\mathcal{W}_t=\sum_{i\in\mathcal{I}}\sqrt{\lambda_i}\,\mathcal{W}^i_t\,e_i,
\end{equation}
where $\mathcal{W}^i$ are scalar Wiener processes. Furthermore, we denote $U^Q=Q^{\frac{1}{2}}(U)$, where $Q^{\frac{1}{2}}$ is the pseudo-square root of the covariance operator $Q$. To fix notation, we denote the space of all Hilbert-Schmidt operators from a given Hilbert space $H^1$ to another Hilbert space $H^2$ as $L_{HS}(H^1,H^2)$.

We assume that the risk-neutral futures price dynamics $t\mapsto f_t$ are defined as the solution of the stochastic partial differential equation
\begin{equation}\label{Q-forward}
\begin{split}
df_t(\cdot) =Af_t(\cdot)\,dt+\Sigma(f_t(\cdot))\,d\mathcal{W}_t
        =Af_t(\cdot)\,dt+\sum_{i\in\mathcal{I}}\sqrt{\lambda_i}\,\sigma_i(f_t(\cdot))\,d\mathcal{W}^i_t, \quad f_0\in H_\alpha,
\end{split}
\end{equation}
where, for brevity, $\sigma_i(h(\cdot)):=\Sigma_i(h(\cdot))(e_i)\in H_\alpha$. Here, the mapping $\Sigma:=(\Sigma_i)_{i\in\mathcal{I}}: H_\alpha\rightarrow L_{HS}(U^Q,H_\alpha)$ satisfies appropriate measurability conditions, is Lipschitz and admits the uniform linear growth condition $\|\Sigma(h)\|^2_{L_{HS}(U^Q,H_\alpha)}\leq C(1+\|h\|_\alpha)$, for all $h$, where $C$ does not depend on $h$. These conditions guarantee the existence of a unique mild solution, see, e.g., \cite{DePrato Zabczyk}, Chapter 7. The mild solution of \eqref{Q-forward} can be expressed as
\begin{align*}
 f_t(\cdot)&=S_tf_0(\cdot)+\int_0^t S_{t-u}\,\left(\Sigma(f_u(\cdot))\, d\mathcal{W}_u\right) \\
&=S_tf_0(\cdot)+\sum_{i\in\mathcal{I}}\sqrt{\lambda_i}\int_0^t S_{t-u}\,\left(\sigma_i(f_u(\cdot))\right)\, d\mathcal{W}^i_u.
\end{align*}
Furthermore, we assume that
\[ \mathbf{E}\left[\int_0^T \| \Sigma(f_t(\cdot))\|^2_{L_{HS}(U^Q,H_\alpha)}dt \right]<\infty. \]
This ensures that the mild solution is also a weak solution, see \cite{Gawarecki Mandrekar}, Thrm 3.2.

We need also a description of the futures dynamics under the market measure $\mathbf{P}$. To this end we apply Girsanov's theorem, see, e.g., \cite{DePrato Zabczyk}, Theorem 10.14. For a given $\psi\in U^Q$, define the process $\hat{\mathcal{W}}$ as
\[ d\hat{\mathcal{W}}_t=d\mathcal{W}_t-\psi dt. \]
Then $\hat{\mathcal{W}}$ is a $U$-valued Wiener process under an equivalent measure $\mathbf{P}$ defined via the Radon-Nikodym derivative $\frac{d\mathbf{P}}{d\mathbf{Q}}=\mathcal{E}(-\psi\cdot \mathcal{W})_T$. Here, $\mathcal{E}$ denotes the Dol\'{e}ans-Dade exponential, see \cite{Filipovic}, Chapter II. Furthermore, the covariance operator of $\hat{\mathcal{W}}$ is $Q$. The futures price dynamics can be written under the measure $\mathbf{P}$ as
\begin{align}
\label{P-forward}
df_t(\cdot)&=\left(A f_t(\cdot)+\Sigma(f_t(\cdot))(\psi)\right)\,dt+\Sigma(f_t(\cdot))\,d\hat{\mathcal{W}}_t \nonumber\\
            &:=\nu(f_t(\cdot))dt+\Sigma(f_t(\cdot))\,d\hat{\mathcal{W}}_t.
\end{align}
In financial terms, we can interpret the element $\psi$ as the market price of risk. Being an element of a Hilbert space, the market price of risk $\psi$ can have essentially richer structure than its finite-dimensional counterpart, a constant vector. Indeed, if the space $U^Q$ is a space of functions of the time to maturity $y$, the market price of risk will also depend on time to maturity. Furthermore, we can interpret the term $\Sigma(f_t(\cdot))(\psi)$ as the risk premium. We observe immediately that the risk premium is proportional to the volatility of the futures price curve. Our stochastic dynamics allow for a very flexible modelling of the risk premium, taking into account possible idiosyncraticites between futures contracts with different maturities. Since the $\mathbf{P}$-dynamics of the futures price curve is completely determined by the volatility $\Sigma$ and the market price of risk $\psi$, we call the pair $(\Sigma,\psi)$ a \emph{futures price model}.

\subsection{Portfolios of Futures Contracts} As was mentioned already in the introduction, a futures contract is a derivative security written on the  futures price $f_t$. The initial market value of this contract is zero and it yields a cash flow during its lifetime generated by the fluctuations of the futures price. The profits (losses) caused by the fluctuations are put in (drawn from) a margin account such that the brokers collateral remains on the required level over time. We assume that the investor can use the margin account also as a borrowing account with an interest rate same as the constant risk free rate of return. In order to describe the time evolution of the wealth generated by trading on futures contracts, consider first a single futures contract in discrete time. Denote the length of the time step as $\Delta t$. The parties enter a single futures contract at time $t$ with the delivery at time $T>t+\Delta t$. This position is canceled at time $t+\Delta t$ and a new contract is entered with the same time of delivery -- this is the \emph{marking to market}-procedure. The resulting profit/loss is the associated fluctuation of the futures price, i.e.
\begin{equation*}
\begin{split}
f_{t+\Delta t}(y-\Delta t)-f_{t}(y)=f_{t+\Delta t}(y)-f_{t}(y)-(f_{t+\Delta t}(y)-f_{t+\Delta t}(y-\Delta t)),
\end{split}
\end{equation*}
where $y=T-t$ is the time to delivery at time $t$. When passing to the limit $\Delta t\rightarrow0$, this gives the expression
\[ df_t(y)-\frac{\partial f_t(y)}{\partial y}\,dt.  \]
The wealth dynamics generated by a single futures contract investment can now be written as
\begin{equation}\label{Wealth Evolution One Futures Contract}
dX_t=\left(rX_t-\frac{\partial f_t(y)}{\partial y}\right)dt+df_t(y), \ X_0=0.
\end{equation}
Here, $r>0$ denotes the risk free rate of return. It is worth pointing out that the wealth $X$ can become negative.

Consider now a general portfolio of futures contracts. A futures portfolio consists of future obligations to purchase or sell electricity for the futures price. As a starting point, we consider the portfolio strategies to be measure-valued processes $t\mapsto\Gamma(t,\cdot)$. Here, the measure $\Gamma(t,\cdot)$ gives the portfolio weights for contracts with times to maturity $y\geq0$. In other words, it gives the net amount of obligations undertaken by the investor for any given time to maturity. Given a portfolio of futures contracts and the fact that it is costless to take positions in futures, the instantaneous payoff from holding this portfolio is determined solely by the fluctuations of the associated futures prices. This results into the wealth dynamics generated by an arbitrary portfolio strategy $\Gamma$ which can be written as
\begin{equation}\label{Wealth Evolution General}
dX^\Gamma_t=\left(rX^\Gamma_t-\int_0^\infty \frac{\partial f_t(y)}{\partial y}\Gamma(t,dy)\right)dt+\int_0^\infty df_t(y)\Gamma(t,dy), \ X^\Gamma_0=0.
\end{equation}
To make the expression \eqref{Wealth Evolution General} technically tangible, we assume that portfolios $\Gamma:t\mapsto\Gamma(t,\cdot)$ are $\mathbf{F}$-progressively measurable processes taking values in the dual $H^*_\alpha$. Thus we can write the expression \eqref{Wealth Evolution General} in the more compact form
\begin{equation}\label{Wealth Evolution General 002}
dX^\Gamma_t=\left(rX^\Gamma_t-\left\langle \Gamma(t),\frac{\partial f_t(\cdot)}{\partial y}\right\rangle\right)dt+\left\langle\Gamma(t), df_t(\cdot)\right\rangle, \ X^\Gamma_0=0,
\end{equation}
where $\langle\cdot,\cdot\rangle$ is the bilinear pairing on $H_\alpha$. The wealth $X^\Gamma$ can be written as
\begin{equation}\label{Wealth General}
\begin{split}
X_t &=\int_0^t\left(rX^\Gamma_s-\left\langle \Gamma(s),\frac{\partial f_s(\cdot)}{\partial y}\right\rangle\right)ds+\int_0^t\left\langle\Gamma(s), df_s(\cdot)\right\rangle \\
    &=\int_0^t\left(rX^\Gamma_s+\left\langle \Gamma(s),\Sigma(f_t(\cdot))(\psi))\right\rangle\right)ds+\int_0^t\left\langle\Gamma(s), \Sigma(f_s(\cdot)) d\hat{\mathcal{W}}_s \right\rangle,
\end{split}
\end{equation}
for all $t\in[0,T]$. For the right hand side of \eqref{Wealth General} to make sense, we additionally assume that the portfolio process $\Gamma$ satisfies the $L_2(\mathbf{P})$-condition
\begin{equation}\label{Admissibility L_2 condition}
\mathbf{E}\left[ \left(\int_0^T \left\langle\Gamma(t),\Sigma(f_t(\cdot))(\psi) \right\rangle dt \right)^2+\int_0^T\sum_{i\in\mathcal{I}}\sqrt{\lambda_i}\left\langle\,\Gamma(t),\sigma_i(f_t(\cdot))\,\right\rangle^2 dt \right]<\infty.
\end{equation}
Portfolio processes $\Gamma$ satisfying the condition \eqref{Admissibility L_2 condition} will be called \emph{admissible}. Given the time $t$ and the value of the wealth process $X^\Gamma_t=x$, we denote the set of admissible portfolios as $\mathcal{A}(t,x)$.

The portfolio strategies we are considering are not self-financing in the classical meaning of the word. Indeed, since it costs nothing to enter a futures position, the investor needs no initial wealth to set up a portfolio. However, the investor needs credit since the wealth (the amount of money in the external bank account, that is) can become negative. In this case, the investor needs to borrow money to balance the collateral, i.e. to infuse cash into the system. In this sense, the portfolio optimization problem considered here resembles more betting than traditional investment problem. The admissible portfolio strategies $\Gamma$ describe the investor's allocation of wealth over the {\it whole} futures curve. This will include so-called roll-over strategies, where one invests in single a futures contract with a given time to maturity $y$. By picking more times to maturity $y_1,\ldots,y_n$, one can create roll-over portfolios of futures contracts.

\subsection{The Optimization Problem} The objective of the investor is to determine an admissible investment policy which maximizes the expected utility at a terminal time $T$. To study this problem, we first propose an appropriate set of utility functions. We call a function $u:\mathbf{R}\rightarrow[-\infty,\infty)$ a \emph{utility function} if
\begin{itemize}
\item[(1)] it is concave, nondecreasing and upper-semicontinuous,
\item[(2)] the half line $\dom(u):=\{x\in\mathbf{R}\,|\,u(x)>-\infty \}$ is a nonempty subset of $[0,\infty)$,
\item[(3)] $u'$ is continuous, positive and strictly decreasing on the interior of $\dom(u)$ and $\lim_{x\rightarrow\infty}u'(x)=0$,
\end{itemize}
see \cite{Karatzas Shreve}, p. 94. Given a utility function $u$, the portfolio optimization problem reads as
\begin{equation}\label{Merton Problem}
V(t,x)=\sup_{\Gamma\in\mathcal{A}(t,x)}\mathbf{E}\left[u(X^\Gamma_T))\,|\,X^\Gamma_t=x\right], \ V(T,x)=u(x),
\end{equation}
with $0\leq t\leq T$ and $x\geq 0$.

\section{Finite-dimensional Realizations}

\subsection{Invariant Foliation: A Characterization} In this section we assume that the real separable Hilbert space $U$ is truncated to be finite-, say, $n$-dimensional. We can identify the truncated $U$ with the euclidian space $\mathbf{R}^n$. From a practical point of view, this corresponds, for example, to the case where the eigenvalues $\lambda_i$ are very small for all $i>n$. Thus the associated scalar processes $\mathcal{W}^i$, $i>n$ contribute very little to the overall fluctuations of the process $\mathcal{W}$ and, consequently, the processes $\mathcal{W}_i$, $i\leq n$, can be identified as the \emph{principal components} driving the futures curve $f_t$, see, e.g., \cite{Carmona Tehranchi}. We denote the truncated (i.e., $\mathbf{R}^n$-valued) Wiener process as $W$. We remark that the market price of risk $\psi$ degenerates now into a constant vector $(\psi_1,\dots,\psi_n)\in\mathbf{R}^n$. In \cite{GS} a two-factor model for the oil futures price dynamics is proposed, which corresponds in our context to assuming that the futures curve is driven by a two-dimensional Wiener process. On the other hand, in electricity there exist empirical and theoretical evidence for a high degree of idiosyncratic risk, which means that a high dimensional Brownian motion is required for the futures curve dynamics (see \cite{KO,BCK}).

The first objective of this section is to pin down conditions on the volatility structure $\Sigma$ under which the price dynamics given as the solution of the SPDE \eqref{P-forward} admit a finite-dimensional realization. The existence of a finite-dimensional realization means, roughly speaking, that for a given initial curve $h_0$, there exists a nice family of manifolds in $H_\alpha$ such that the initial curve is on the initial manifold and that the evolution of the curve in confined to the family of manifolds. This problem setting has been studied extensively during the last decade or so, references include \cite{Bj�rk Gombani, Bj�rk Svensson, Bj�rk Landen, Bj�rk Blix Landen, Filipovic, Filipovic Teichmann 1, Tappe}. In our analysis we follow the approach of \cite{Tappe}. We proceed by making the following definitions.

\begin{definition}
Let $\mathcal{V}$ be a linear $d$-dimensional subspace of $H_\alpha$. Then
\begin{itemize}
\item[(A)] a family $\mathbf{M}:=(\mathcal{M}_t)_{t\geq0}$ of linear submanifolds of $H_\alpha$ is called a \emph{foliation generated by $\mathcal{V}$}, if the there exist a continuously differentiable $\theta:\mathbf{R}_+\rightarrow H_\alpha$ such that $\mathcal{M}_t=\theta(t)+\mathcal{V}$, for all $t\geq0$. Here, $\theta$ is called the \emph{parametrization} of $\mathbf{M}$. Analogously, the \emph{tangent space} is defined as $T\mathcal{M}_t:=\theta'(t)+\mathcal{V}$ for all $t\geq0$,

\item[(B)] a foliation $\mathbf{M}$ is called \emph{invariant} for the SPDE \eqref{P-forward} if for all $t_0\in\mathbf{R}_+$ and $h_0\in\mathcal{M}_{t_0}$ we have $\mathbf{P} (f_t\in\mathcal{M}_{t_0+t})=1$, $t\geq0$, where $f_t$ is the (weak) solution for \eqref{P-forward} with $f_0=h_0$.
\end{itemize}
\end{definition}

\begin{definition}
Let $\mathcal{V}$ be a linear $d$-dimensional subspace of $H_\alpha$. Then the equation \eqref{P-forward} is said to have a \emph{finite dimensional realization generated by $\mathcal{V}$} if for all $h\in\mathcal{D}(A)$ there exist a foliation $\mathbf{M}^h=(\mathcal{M}^h_t)_{t\geq0}$ generated by $\mathcal{V}$ with $h\in \mathcal{M}^h_0$ which is invariant for \eqref{P-forward}.
\end{definition}

As we see from these definitions, the invariant foliation is the basic building block of a finite-dimensional realization. The next proposition gives necessary and sufficient conditions for invariance of a given foliation for the SPDE \eqref{P-forward}. This proposition is analogous to Theorem 5.3 in \cite{Tappe}, where a similar characterization is proved when the underlying price dynamics are driven by a one-dimensional Wiener process. In our result, we consider the case where the driver is a multi-dimensional Wiener process. In addition, the drift term in the futures price dynamics \eqref{P-forward} cannot be handled immediately using existing results. Indeed, Theorem 5.3 in \cite{Tappe} is concerned with the HJM approach of term structure modeling of {\it forward rates} which leads into the well-known HJM drift condition for the underlying dynamics, whereas we consider a different drift condition stemming from the fact that we analyze futures prices. For convenience, we recall the definition of $\nu$:
\[ \nu(h)=A h+\Sigma(h)(\psi),  \]
where $h\in\mathcal{D}(A)$. Furthermore, we make the following assumption on the volatility structure. These assumptions allow us to use Theorem 2.11 in \cite{Tappe} (We point out that Theorem 2.11 in \cite{Tappe} is proved for a one-dimensional driving Brownian motion but we observe that the result holds also for a multi-dimensional Brownian driver by simply plugging it into the proof).

\begin{assumption}
We assume that the components $\sigma_j$ are continuously differentiable with $\sigma_j(H_\alpha)\in\{ h\in H_{\alpha'} \ : \ h(\infty)=0 \}$, for some $\alpha'>\alpha$, and that there exist $L,M>0$ such that
\begin{displaymath}
\begin{split}
\|\sigma_j(h_1)-\sigma_j(h_2) \|_\alpha&\leq L\|h_1-h_2 \|_\alpha, \ h_1, h_2 \in H_\alpha, \\
\|\sigma_j(h) \|_\alpha&\leq M , \ h\in H_\alpha,
\end{split}
\end{displaymath}
for all $j=1,\dots,n$.
\end{assumption}

\begin{proposition}\label{Invariance characterization proposition}
Let $\mathbf{M}$ be a foliation generated by the $d$-dimensional linear space $\mathcal{V}\subset H_\alpha$ spanned by elements $\{v_i\}_{i=1}^d$. The following statements are equivalent:
\begin{itemize}

\item[(A)] The foliation $\mathbf{M}$ is invariant for the equation \eqref{P-forward}
\item[(B)] We have

\begin{equation}\label{Invariant foliation proposition: condition 1}
\left\{\begin{split}
&\theta(t)\in\mathcal{D}(A), \text{ for all } t\geq0,\\
&v_i\in\mathcal{D}(A), \text{ for all } i=1,\dots,d,
\end{split}\right.
\end{equation}
and there exist functions $\beta\in C^{0,1}(\mathbf{R}_+\times\mathbf{R}^d;\mathbf{R}^d)$ and $\kappa \in C^{0,1}(\mathbf{R}_+\times\mathbf{R}^d;\mathbf{R}^d\times \mathbf{R}^n)$such that
\begin{equation}\label{Invariant foliation proposition: condition 2}
\left\{
\begin{split}
&\nu\left(\theta(t)\right)=\theta'(t)+\sum_{i=1}^d \beta_i(t,0) v_i \in T\mathcal{M}_t, \\
&\sigma_j\left(\theta(t)+\sum_{i=1}^d z_i v_i \right)=\sum_{i=1}^d v_i \kappa_{ij}(t,z)\in\mathcal{V} , \ j=1,\dots,n, \end{split}
\right.
\end{equation}
for all $(t,z)\in\mathbf{R}_+\times\mathbf{R}^d$, and the elements $v_k$ satisfy the ordinary differential equations
\begin{equation}\label{ODEs}
\frac{d}{dy}v_k - \sum_{i=1}^d v_i\frac{\partial}{\partial z_k}\left(\beta_i(t,z)-\sum_{j=1}^n \kappa_{ij}(t,z)\psi_j\right) = 0,
\end{equation}
for all $k=1,\dots,d$.
\end{itemize}
\end{proposition}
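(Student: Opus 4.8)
The plan is to prove the two implications by transporting the SPDE onto the affine charts $\Phi_t(z):=\theta(t)+\sum_{i=1}^d z_i v_i$ of the leaves $\mathcal{M}_t$, following the line of argument of Theorem 5.3 in \cite{Tappe}. The analytic input is Theorem 2.11 of \cite{Tappe} (which, as remarked above, survives the passage to a multi-dimensional driver): under our standing assumptions on $\Sigma$ it provides, for any foliation invariant for \eqref{P-forward}, that the leaves lie in $\mathcal{D}(A)$ with enough regularity (in particular $t\mapsto A\theta(t)$ is continuous) and that, restricted to such leaves, the weak (equivalently mild) solution of \eqref{P-forward} is in fact a \emph{strong} solution; conversely, it lets one recognise any $\mathcal{D}(A)$-valued semimartingale satisfying the strong-solution identity of \eqref{P-forward} as the solution issued from its initial value, by uniqueness of the mild solution. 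The feature exploited throughout is that $\Phi_t$ is \emph{affine} in $z$, so It\^o's formula applied to $\Phi_{t_0+t}(Z_t)$ produces no second-order correction (there is no need to pass to Stratonovich form), and the coefficients of \eqref{P-forward} may be matched term by term against those of $d\Phi_{t_0+t}(Z_t)=\theta'(t_0+t)\,dt+\sum_i v_i\,dZ^i_t$. Write $\{\pi_i\}_{i=1}^d\subset H^*_\alpha$ for the functionals dual to the basis $\{v_i\}_{i=1}^d$ of $\mathcal{V}$.

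To prove (A)$\Rightarrow$(B), fix $t_0\geq0$ and $h_0=\Phi_{t_0}(z_0)\in\mathcal{M}_{t_0}$ and let $f$ solve \eqref{P-forward} with $f_0=h_0$. By invariance and Theorem 2.11 of \cite{Tappe}, $f_t\in\mathcal{M}_{t_0+t}\subset\mathcal{D}(A)$, giving \eqref{Invariant foliation proposition: condition 1} (as $\mathcal{M}_t$ is affine, $\theta(t)\in\mathcal{D}(A)$ and $\mathcal{V}\subset\mathcal{D}(A)$), and $Z^i_t:=\pi_i(f_t-\theta(t_0+t))$ defines an It\^o process (apply $\pi_i$ to the strong-solution equation) with $f_t=\Phi_{t_0+t}(Z_t)$. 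Since $f_\cdot-\theta(t_0+\cdot)$ is $\mathcal{V}$-valued, projecting its canonical decomposition onto $\mathcal{V}^{\perp}$ shows that its continuous local-martingale part $\sum_j\int_0^\cdot\sigma_j(f_u)\,dW^j_u$ and its finite-variation part $\int_0^\cdot(\nu(f_u)-\theta'(t_0+u))\,du$ both take values in $\mathcal{V}$. The first forces $\sigma_j(f_u)\in\mathcal{V}$ for a.e.\ $u$ a.s.\ and hence, by sample-path continuity, continuity of $\sigma_j$ and of $\theta$, and the arbitrariness of $(t_0,z_0)$, $\sigma_j(\Phi_t(z))=\sum_{i=1}^d\kappa_{ij}(t,z)v_i$ for all $(t,z)$, where $\kappa_{ij}(t,z):=\pi_i(\sigma_j(\Phi_t(z)))\in C^{0,1}$ since $\sigma_j,\theta\in C^1$; this is the second line of \eqref{Invariant foliation proposition: condition 2}. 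The second forces $\nu(\Phi_t(z))-\theta'(t)\in\mathcal{V}$ for all $(t,z)$ (using the continuity of $t\mapsto A\theta(t)$), so $\beta_i(t,z):=\pi_i(\nu(\Phi_t(z))-\theta'(t))\in C^{0,1}$ is well defined and equals the drift density of $Z^i$. At $z=0$ this reads $\nu(\theta(t))=\theta'(t)+\sum_i\beta_i(t,0)v_i$, the first line of \eqref{Invariant foliation proposition: condition 2}, equivalently $A\theta(t)-\theta'(t)=\sum_i\gamma_i(t,0)v_i$ with $\gamma_i:=\beta_i-\sum_j\kappa_{ij}\psi_j$. Inserting this into the identity $\sum_i\beta_i(t,z)v_i=\nu(\Phi_t(z))-\theta'(t)=A\theta(t)-\theta'(t)+\sum_k z_k v_k'+\sum_i\big(\sum_j\kappa_{ij}(t,z)\psi_j\big)v_i$, where $v_k'=\tfrac{d}{dy}v_k=Av_k$, leaves the purely deterministic relation $\sum_{k=1}^d z_k v_k'=\sum_{i=1}^d(\gamma_i(t,z)-\gamma_i(t,0))v_i$, valid for all $(t,z)$; differentiating it in $z_k$ yields exactly \eqref{ODEs}.

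To prove (B)$\Rightarrow$(A), fix $t_0$ and $z_0$ and let $Z$ be the unique solution on $[0,T]$ of $dZ^i_t=\beta_i(t_0+t,Z_t)\,dt+\sum_{j=1}^n\kappa_{ij}(t_0+t,Z_t)\,dW^j_t$ with $Z_0=z_0$; existence and uniqueness hold because the coefficients are locally Lipschitz in $z$ and of at most linear growth. Indeed, since \eqref{ODEs} holds for \emph{every} $(t,z)$ while its left-hand side is independent of $(t,z)$ and $\{v_i\}$ is a basis of $\mathcal{V}$, each $\partial\gamma_i/\partial z_k$ is constant; hence $v_k'=\sum_i(\partial\gamma_i/\partial z_k)v_i\in\mathcal{V}$, $\gamma$ is affine in $z$, and (by the Assumption) $\kappa$ is bounded. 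Put $\hat f_t:=\Phi_{t_0+t}(Z_t)\in\mathcal{M}_{t_0+t}$ and expand $d\hat f_t=\theta'(t_0+t)\,dt+\sum_i v_i\,dZ^i_t$ by It\^o: its diffusion against $W^j$ equals $\sum_i\kappa_{ij}(t_0+t,Z_t)v_i=\sigma_j(\hat f_t)$ by the second line of \eqref{Invariant foliation proposition: condition 2}, while its drift $\theta'(t_0+t)+\sum_i\beta_i(t_0+t,Z_t)v_i$ equals $A\hat f_t+\Sigma(\hat f_t)(\psi)=\nu(\hat f_t)$: this follows from $A\hat f_t=A\theta(t_0+t)+\sum_k Z^k_t v_k'$, the relation $A\theta(t)-\theta'(t)=\sum_i\gamma_i(t,0)v_i$ (first line of \eqref{Invariant foliation proposition: condition 2}), the affine identity $\sum_k z_k v_k'=\sum_i(\gamma_i(t,z)-\gamma_i(t,0))v_i$ just established from \eqref{ODEs}, and $\gamma_i+\sum_j\kappa_{ij}\psi_j=\beta_i$. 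Thus $\hat f$ is a strong, hence (by Theorem 2.11 of \cite{Tappe}) the mild, solution of \eqref{P-forward} with $\hat f_0=h_0$; by uniqueness $f=\hat f$, so $f_t\in\mathcal{M}_{t_0+t}$ and $\mathbf{M}$ is invariant.

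The main obstacle is not the coefficient bookkeeping above but the analytic reduction absorbed into Theorem 2.11 of \cite{Tappe}: that an invariant foliation must have its leaves in $\mathcal{D}(A)$ (with $A\theta$ continuous) and that on such leaves mild and strong solutions of \eqref{P-forward} coincide. This is exactly where the unboundedness of $A$ --- equivalently, the fact that our drift $\nu(h)=Ah+\Sigma(h)(\psi)$ is not of HJM type --- precludes a direct appeal to existing results, and where one must verify that the scalar-driver, HJM-drift argument of \cite{Tappe} carries over to our multi-dimensional driver and to $\nu$; the boundedness and Lipschitz hypotheses imposed on the $\sigma_j$ are there precisely to make this verification routine. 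The only remaining care is conceptual rather than technical: keeping apart the three derivatives $\partial_t$, $\partial_y=A$ and $\partial_{z_k}$, and noticing that demanding \eqref{ODEs} for \emph{all} $(t,z)$ rigidly forces the affine-in-$z$ structure of $\gamma$ on which the converse implication rests.
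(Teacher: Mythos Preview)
Your proposal is correct and follows essentially the same approach as the paper: both directions rest on Theorem~2.11 of \cite{Tappe}, both define $\kappa$ and $\beta$ as the coordinates of $\sigma_j$ and $\nu-\theta'$ in the basis $\{v_i\}$ (your dual functionals $\pi_i$ are the paper's $\mathcal{I}^{-1}$), and both extract \eqref{ODEs} by differentiating the drift identity in $z_k$ and use the forced affine-in-$z$ structure of $\gamma=\beta-\sum_j\kappa_{\cdot j}\psi_j$ for the converse. The only cosmetic difference is that for (B)$\Rightarrow$(A) the paper invokes the \emph{characterization} in Theorem~2.11 of \cite{Tappe} and verifies $\nu(h)\in T\mathcal{M}_t$ directly, whereas you construct the coordinate process $Z$ and check that $\hat f=\Phi_{t_0+\cdot}(Z)$ is a strong solution before appealing to uniqueness; the underlying computation is identical.
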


\begin{proof}
\emph{Necessity}: Fix $t\geq0$ and assume that the foliation $\mathbf{M}$ is invariant for the equation \eqref{P-forward}. Then we know from \cite{Tappe}, Thrm. 2.11, that $\mathcal{M}_t$ and, consequently, $\theta(t)$ and $v_i$ are in $\mathcal{D}(A)$ for all $t\geq0$ and $i=1,\dots,d$. Let $h\in\mathcal{M}_t$ and write $h=\theta(t)+\sum_{i=1}^d z^h_i v_i$ where $z^h\in\mathbf{R}^d$. Again, we know from \cite{Tappe}, Thrm. 2.11, that $\nu(h)\in T\mathcal{M}_t$ and $\sigma_j(h)\in\mathcal{V}$ for all $j=1,\dots,n$. This yields
\begin{equation}
\sigma_j(h)=\sum_{i=1}^d z^{\sigma_j(h)}_i v_i,
\end{equation}
where $z^{\sigma_j(h)}\in\mathbf{R}^d$ for each $j=1,\dots,n$. Define the linear isomorphism $\mathcal{I}:\mathbf{R}^d\rightarrow \mathcal{V}$ as $\mathcal{I}(y)=\sum_{i=1}^d z_iv_i$. Then we can write
\begin{equation}\label{kappa definition in the proof}
\kappa_{\cdot j}(t,z^h) := z^{\sigma_j(h)} = \mathcal{I}^{-1}\left(\sigma_j\left(\theta(t)+\sum_{i=1}^d z^h_i v_i\right) \right).
\end{equation}
for all $j=1,\dots,n$. The claimed regularity properties of $\kappa$ follow from the assumptions on $\sigma$.

To prove the claim on $\nu$, we find similarly that
\begin{equation}\label{Ah expression}
\nu(h)=\frac{d}{dy}\theta(t)+\sum_{i=1}^d z^h_i \frac{d}{dy} v_i + \sum_{i=1}^d \sum_{j=1}^n v_i z_i^{\sigma_j(h)}\psi_j = \theta'(t) + \sum_{i=1}^d z^{\nu(h)}_i v_i
\end{equation}
where $z^{\nu(h)}\in\mathbf{R}^d$. Then we find using \eqref{kappa definition in the proof} that
\begin{equation}\label{beta definition}
\beta(t,z^h):=z^{\nu(h)} = \mathcal{I}^{-1}\left(\frac{d}{dy}\theta(t)+\sum_{i=1}^d \left(z^h_i \frac{d}{dy} v_i+ \sum_{j=1}^n v_i\kappa_{ij}(t,z^h)\psi_j \right)-\theta'(t) \right).
\end{equation}
Again, the claimed regularity properties of $\nu$ follow from the assumptions on $\sigma$.

Finally, the differential equations \eqref{ODEs} follow by differentiating the expression \eqref{Ah expression} with respect to $z^h_k$ 
for all $k=1,\dots,d$.

\emph{Sufficiency}: Assume now that the conditions \eqref{Invariant foliation proposition: condition 1} -- \eqref{ODEs} hold and let $h\in\mathcal{M}_t$. Using \cite{Tappe}, Thrm. 2.11, it suffices to show that
\begin{equation}\label{Necessary Claim on nu}
\nu\left( \theta(t) + \sum_{i=1}^d z^h_i v_i \right) = \theta'(t) + \sum_{i=1}^d \beta_i(t,z^h) v_i
\end{equation}
to prove the claim. To this end, we observe first that $\sigma_j(\theta(t))=v_i\kappa_{ij}(t,0)$ for all $j=1,\dots,n$. Thus we can write
\begin{equation}
\begin{split}
\nu&\left( \theta(t) + \sum_{i=1}^d z^h_i v_i \right) \\
 &= \nu(\theta(t)) + \sum_{i=1}^d z^h_i \frac{d}{dy} v_i + \sum_{j=1}^n\sum_{i=1}^d v_i(\kappa_{ij}(t,z^h)-\kappa_{ij}(t,0))\psi_j \nonumber\\
&= \theta'(t)+ \sum_{i=1}^d \left(v_i\left(\beta_i(t,0)+\sum_{j=1}^n(\kappa_{ij}(t,z^h)-\kappa_{ij}(t,0))\psi_j \right)+z^h_i \frac{d}{dy}v_i \right).\nonumber
\end{split}
\end{equation}
By using the differential equations \eqref{ODEs} and changing the order of summation, we find that
\begin{align*}
\sum_{i=1}^d& \left(v_i\left(\beta_i(t,0)+\sum_{j=1}^n(\kappa_{ij}(t,z^h)-\kappa_{ij}(t,0))\psi_j \right)+z^h_i \frac{d}{dy}v_i \right) \\
&\qquad=\sum_{i,k=1}^d \left(v_i\left(\beta_i(t,0)+\sum_{j=1}^n(\kappa_{ij}(t,z^h)-\kappa_{ij}(t,0))\psi_j \right) \right.\\
&\qquad\qquad\left.+ z^h_iv_k\frac{\partial}{\partial z_i}\left(\beta_k(t,z^h)-\sum_{j=1}^n\kappa_{kj}(t,z^h)\psi_j\right)\right) \\
&\qquad=\sum_{i,k=1}^d \left(\left(\beta_k(t,0)-\sum_{j=1}^n \kappa_{kj}(t,0)\psi_j \right) \right. \\
&\qquad\qquad\left.+ z^h_i\frac{\partial}{\partial z_i}\left(\beta_k(t,z^h)-\sum_{j=1}^n \kappa_{kj}(t,z^h)\psi_j \right)+\sum_{j=1}^n \kappa_{kj}(t,z^h)\psi_j \right)v_k.
\end{align*}
To proceed, we readily verify that
\begin{equation}\label{beta computation}
\begin{split}
\beta(t,z)&-\sum_{j=1}^n \kappa_{\cdot j} (t,z)\psi_j  \\
&= \beta(t,z)-\left(\mathcal{I}^{-1}\circ\mathcal{I}\right)\left(\sum_{j=1}^n \kappa_{\cdot j} (t,z)\psi_j\right) \\
&= \mathcal{I}^{-1}\left( \frac{d}{dy}\theta(t) - \sum_{i=1}^d z^h_i \frac{d}{dy}v_i - \theta'(t) \right)
\end{split}
\end{equation}
On the other hand, since $\mathcal{I}$ is a linear isomorphism, the inverse $\mathcal{I}^{-1}:\mathcal{V}\rightarrow\mathbf{R}^d$ is also linear. Thus we observe from the expression \eqref{beta computation} that the partial derivatives $\frac{\partial}{dz_i}(\beta(t,z) - \sum_{j=1}^n\kappa_{\cdot j}(t,z) \psi_j)$ are independent of $z$. Consequently, we find that
\[ \beta(t,0)-\sum_{j=1}^n \kappa_{\cdot j} (t,0)\psi_j + \sum_{i=1}^d z_i\frac{\partial}{\partial z_i}\left(\beta(t,z)-\sum_{j=1}^n \kappa_{\cdot j}(t,z)\psi_j \right) \]
is the Maclaurin series of the function $\beta-\sum_{j=1}^n\kappa_{\cdot j}\psi_j$. This proves the identity \eqref{Necessary Claim on nu}.
\end{proof}

Proposition \ref{Invariance characterization proposition} gives a convenient characterization of invariance of a given foliation. First, it gives a set of ordinary differential equations \eqref{ODEs} which have the spanning functions $v_i$ as their solutions. The coefficients of these ODEs are characterized in terms of the original volatility structure $\Sigma$ and the market price of risk $\psi$. Furthermore, we observe that the volatility $\Sigma$ must be of a very specific type in order to be associable to an affine realization. Indeed, each of the components $\sigma_j$ must map the manifold $\mathcal{M}$ into the linear space $\mathcal{V}$ for all $t$. This means that the randomness in the price dynamics is confined to the finite-dimensional linear structure $\mathcal{V}$. Put differently, the price dynamics admitting a finite-dimensional realization can diffuse only in finitely many directions in the infinite-dimensional state space $H_\alpha$.

We can identify the coordinate process $Z$ driving the finite-dimensional realization using Proposition \ref{Invariance characterization proposition}. To this end, fix $t_0\in\mathbf{R}_+$ and $h\in\mathcal{M}_{t_0}$. Then we have a unique $z\in\mathbf{R}^d$ such that $h=\theta(t_0)+\sum_{i=1}^d z_i v_i$. Now, let $Z$ be the strong solution of the It\^{o} equation
\[ dZ_t = \beta(t_0+t,Z_t)dt + \kappa(t_0+t,Z_t)dW_t, \ Z_0=z,  \]
where $\beta$ and $\kappa$ are given by Proposition \ref{Invariance characterization proposition}. Then it is straightforward to verify using It\^{o}'s formula that the $H_\alpha$-valued process $f$ defined as
\begin{equation}\label{FDR expression}
f_t:=\theta(t_0+t)+\sum_{i=1}^d Z^i_t v_i, \  f_0=h,
\end{equation}
is the strong solution of SPDE \eqref{P-forward}.

\subsection{Interpretation of the Coordinate Process}\label{Interpretation Section} The coordinate process $Z$ has \emph{a priori} no intrinsic economical interpretation. However, we can equip it with one. To this end, we observe that the futures curve can be decomposed as
\[ f_t=\pi_\mathcal{V} f_t + \pi_{\mathcal{V}^\bot} f_t = \pi_\mathcal{V} f_t + \pi_{\mathcal{V}^\bot} \mathcal{M}_{t+t_0}.  \]
Here, the latter equality (which holds almost surely) follows from the invariance of the foliation $\mathbf{M}$ and the fact that the projection $\pi_{\mathcal{V}^\bot} \mathcal{M}_{t+t_0}$ is a singleton set. We point out that the projection $\pi_{\mathcal{V}^\bot} \mathcal{M}_{t+t_0}$ can be used as a parametrization of the foliation and that it is the unique parametrization which is in $\mathcal{V}^{\bot}$ for all $t\geq0$. According to \eqref{FDR expression}, we can rewrite the futures price as
\begin{equation}
f_t=\pi_{\mathcal{V}^\bot} \mathcal{M}_{t+t_0}+\sum_{i=1}^d Z^i_t v_i.
\end{equation}
Denote a basis of the subspace $\mathcal{V}^\bot$ as $\{ w_j \}_{j=1}^\infty$ and let $\Lambda:H_\alpha\rightarrow \mathbf{R}^d$ be a linear continuous operator such that $\Lambda(\mathcal{V})=\mathbf{R}^d$. Then we can write
\begin{equation}
\begin{split}
\Lambda(f_t)        &=\Lambda(\pi_{\mathcal{V}^\bot} \mathcal{M}_{t+t_0})+\Lambda\left(\sum_{i=1}^d Z^i_t v_i\right)=\sum_{j=1}^\infty c^j_t\Lambda(w_j)+\sum_{i=1}^d Z^i_t \Lambda(v_i)
    \\              &=\sum_{j=1}^\infty c_t^j\sum_{i=1}^d b_{ji}\Lambda(v_i)+\sum_{i=1}^d Z^i_t \Lambda(v_i)=\sum_{i=1}^d\left( Z^i_t+\sum_{j=1}^\infty c_t^jb_{ji} \right)\Lambda(v_i).
\end{split}
\end{equation}
Define the matrix $\Theta$ as $\Theta_{ij}=\Lambda_i(v_j)$, where $i,j=1,\dots,d$. Then we can write
\[ Z_t=\Lambda(f_t)\Theta^{-1}-\sum_{j=1}^\infty c_t^jb_{j\cdot}. \]
In other words, we observe that the state process $Z$ can be associated to the image of $f$ in arbitrary linear operator $\Lambda$ modulo an affine transformation depending of $\Lambda$ and the invariant foliation $\mathbf{M}$. This means, in particular, that the coordinate process $Z$ becomes an observable quantity as a functional transformation of quantities which are observed from the futures price curve.

\begin{example}\label{Benchmark example}
Consider the operator $\Lambda$ given by \emph{benchmark contracts for times to maturity $y_i$}, $i=1,\dots,d$, defined as
\[ \Lambda_i(h)=\delta_{y_i}(h), \ h\in H_\alpha. \]
Here, $\delta$ is the evaluation functional on $H_\alpha$, $\delta_y(h)=h(y)$. Then we can write
\[ Z^i_t=\sum_{j=1}^d f_t(y_j)\Theta_{ji}^{-1}-\sum_{j=1}^\infty c_t^jb_{ji}, \]
for all $i=1,\dots,d$. The dimension $d$ gives the number of benchmark contracts needed to reconstruct the whole futures curve $t\mapsto f_t$ via the identity \eqref{FDR expression}. Indeed, we need $d$ benchmark contracts in order to have a connection between the coordinate process $Z$ and the benchmark contracts $f_\cdot(y_i)$. This connection is given by the invertible $d\times d$-matrix $\Theta$ with coefficients $\Lambda_i(v_j)=v_j(y_i)$, $i,j=1,\ldots,d$.
\end{example}

\subsection{Existence of Finite-dimensional Realization: Sufficient Conditions} In the previous subsection, we proved a characterization of the invariance of a given foliation $\mathbf{M}$ for the SPDE \eqref{P-forward}. In this subsection we use this result to give sufficient conditions for the existence of a finite-dimensional realization. To this end, consider the futures price dynamics given by \eqref{P-forward} when the volatility is of the form
\begin{equation}\label{Sigma Shape}
\sigma_j(h)=\sum_{i=1}^p v_i \Phi_{ij}(h), \ h \in H_\alpha,
\end{equation}
for all $j=1,\dots,n$. Here, the functions $v_i$, $i=1,\dots,p$, are linearly independent and each $\Phi_{ij}$ maps $H_\alpha$ into $\mathbf{R}$. We assume that the functionals $\Phi_{ij}$ are twice continuously differentiable and that there exist constants $L,M>0$ such that
\begin{displaymath}
\begin{split}
\|\Phi_{ij}(h_1)-\Phi_{ij}(h_2) \|_\alpha&\leq L\|h_1-h_2 \|_\alpha, \ h_1, h_2 \in H_\alpha, \\
\|\Phi_{ij}(h) \|_\alpha&\leq M , \ h\in H_\alpha,
\end{split}
\end{displaymath}
for all $i=1,\dots,d$ and $j=1,\dots,n$. This guarantees that we can use Proposition \ref{Invariance characterization proposition}. The equation \eqref{P-forward} can now be written as
\begin{equation}\label{P-dynamics 002}
df_t(\cdot)= \left(Af_t(\cdot)+\sum_{j=1}^n \sum_{i=1}^p v_i \Phi_{ij}(f_t(\cdot))\psi_j\right)dt+ \sum_{j=1}^n \sum_{i=1}^p v_i \Phi_{ij}(f_t(\cdot))d\hat{W}^j_t, \ f_0=f,
\end{equation}
where $\hat{W}^j$ are scalar $\mathbf{P}$-Brownian motions. We remark that the expression \eqref{Sigma Shape} gives actually a necessary condition for the existence of a finite-dimensional realization, see \cite{Tappe}, Lemma 3.2. Thus this form of volatility is the most general we can consider in this framework. The following proposition, which is analogous to Proposition 6.2 in \cite{Tappe}, gives sufficient conditions for the existence of a finite-dimensional realization.

\begin{proposition}\label{Invariance sufficient conditions proposition}
Assume that the volatility structure of \eqref{P-forward} is given by \eqref{Sigma Shape} and that the functions $v_i$, $i=1,\dots,p$ are quasi-exponential. Then the equation \eqref{P-forward} admits a finite-dimensional realization.
\end{proposition}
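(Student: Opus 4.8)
The plan is to deduce the statement directly from Proposition~\ref{Invariance characterization proposition} by exhibiting, for every initial curve $h\in\mathcal{D}(A)$, an explicit invariant foliation generated by one and the same finite-dimensional subspace $\mathcal{V}$ (the definition of a finite-dimensional realization insists that $\mathcal{V}$ be the same for all $h$). The natural candidate is the smallest $A$-invariant subspace containing the volatility directions,
\[ \mathcal{V}:=\sum_{i=1}^p \Span\{A^n v_i \,:\, n\geq 0\}. \]
Since each $v_i$ is quasi-exponential, $\Span\{A^n v_i\}_{n\geq 1}$ is finite dimensional, hence $\mathcal{V}$ is finite dimensional, say $\dim\mathcal{V}=d$; it does not depend on $h$, it satisfies $A\mathcal{V}\subseteq\mathcal{V}$, and $\mathcal{V}\subseteq\mathcal{D}(A^\infty)$ because the $v_i$ and all their $A$-iterates lie in $\mathcal{D}(A^\infty)$. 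I would extend $\{v_1,\dots,v_p\}$ to a basis $\{v_1,\dots,v_d\}$ of $\mathcal{V}$ and let $(a_{il})_{i,l=1}^d$ denote the matrix of $A|_{\mathcal{V}}$ in this basis, i.e.\ $Av_l=\sum_{i=1}^d v_i a_{il}$. For the foliation attached to $h$ I would take the shifted parametrization $\theta^h(t):=S_t h$, so $\mathcal{M}^h_t=S_t h+\mathcal{V}$ and $h\in\mathcal{M}^h_0$; here $t\mapsto S_th$ is $C^1$ into $H_\alpha$ with values in $\mathcal{D}(A)$ since $h\in\mathcal{D}(A)$, and $\tfrac{d}{dt}S_th=AS_th=\theta^h{}'(t)$.

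It then remains to verify condition (B) of Proposition~\ref{Invariance characterization proposition}. Condition~\eqref{Invariant foliation proposition: condition 1} holds by the previous paragraph ($\theta^h(t)\in\mathcal{D}(A)$ and $v_i\in\mathcal{D}(A^\infty)\subseteq\mathcal{D}(A)$). For the volatility requirement in~\eqref{Invariant foliation proposition: condition 2}, the shape~\eqref{Sigma Shape} gives
\[ \sigma_j\Big(\theta^h(t)+\sum_{l=1}^d z_lv_l\Big)=\sum_{i=1}^p v_i\,\Phi_{ij}\Big(\theta^h(t)+\sum_{l=1}^d z_lv_l\Big)\in\mathcal{V}, \]
so one reads off the coordinates by setting $\kappa_{ij}(t,z):=\Phi_{ij}(\theta^h(t)+\sum_{l=1}^d z_lv_l)$ for $i\leq p$ and $\kappa_{ij}(t,z):=0$ for $p<i\leq d$. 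The standing hypotheses on the $\Phi_{ij}$ (twice continuously differentiable, Lipschitz, bounded) together with the continuity of $t\mapsto S_th$ yield $\kappa\in C^{0,1}(\mathbf{R}_+\times\mathbf{R}^d;\mathbf{R}^d\times\mathbf{R}^n)$, and these are precisely the assumptions the paper records as making Proposition~\ref{Invariance characterization proposition} applicable.

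For the drift condition I would expand $\nu$ on the manifold. With $h_z:=\theta^h(t)+\sum_{i=1}^d z_iv_i$, using $A\theta^h(t)=\theta^h{}'(t)$, $Av_l=\sum_i v_ia_{il}$ and~\eqref{Sigma Shape},
\[ \nu(h_z)=Ah_z+\sum_{j=1}^n\sigma_j(h_z)\psi_j=\theta^h{}'(t)+\sum_{i=1}^d v_i\Big(\sum_{l=1}^d a_{il}z_l+\sum_{j=1}^n\kappa_{ij}(t,z)\psi_j\Big), \]
so setting $\beta_i(t,z):=\sum_{l=1}^d a_{il}z_l+\sum_{j=1}^n\kappa_{ij}(t,z)\psi_j$ gives the first line of~\eqref{Invariant foliation proposition: condition 2} (at $z=0$, and in fact the full identity~\eqref{Necessary Claim on nu}), with $\beta\in C^{0,1}$. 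Finally, in~\eqref{ODEs} the $\kappa$-terms cancel: $\beta_i(t,z)-\sum_{j=1}^n\kappa_{ij}(t,z)\psi_j=\sum_{l=1}^d a_{il}z_l$, whose $z_k$-derivative is $a_{ik}$, so~\eqref{ODEs} reduces to $\tfrac{d}{dy}v_k-\sum_{i=1}^d v_ia_{ik}=Av_k-\sum_{i=1}^d v_ia_{ik}=0$, which is exactly the definition of $(a_{il})$. Hence Proposition~\ref{Invariance characterization proposition} applies, $\mathbf{M}^h$ is invariant, and since $h\in\mathcal{D}(A)$ was arbitrary and $\mathcal{V}$ is common to all of them, \eqref{P-forward} admits a finite-dimensional realization generated by $\mathcal{V}$.

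There is no serious analytic obstacle once Proposition~\ref{Invariance characterization proposition} is in hand; the expansion of $\nu$ and the check of the regularity classes are routine. The one point requiring care is that a \emph{single} $\mathcal{V}$ must serve every initial curve — which is why $\mathcal{V}$ is taken to be the $A$-invariant hull of $\{v_1,\dots,v_p\}$, forced to be finite dimensional precisely by the quasi-exponentiality hypothesis — together with the verification that $\mathcal{V}\subseteq\mathcal{D}(A^\infty)$, i.e.\ that quasi-exponential functions (which are exponential-polynomials) and all their derivatives genuinely lie in $H_\alpha$; this is where the exponential decay built into $\|\cdot\|_\alpha$ enters.
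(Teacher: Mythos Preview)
Your proof is correct and follows essentially the same route as the paper: define $\mathcal{V}$ as the $A$-invariant hull of $\{v_1,\dots,v_p\}$ (finite-dimensional by quasi-exponentiality), take $\theta^h(t)=S_th$, read off $\kappa$ from the shape~\eqref{Sigma Shape}, set $\beta_i(t,z)=\sum_l a_{il}z_l+\sum_j\kappa_{ij}(t,z)\psi_j$, and verify the conditions of Proposition~\ref{Invariance characterization proposition}. Your write-up is in fact slightly more explicit than the paper's in two respects --- you spell out that the same $\mathcal{V}$ serves every $h\in\mathcal{D}(A)$, and you compute $\nu(h_z)$ for general $z$ rather than only $z=0$ --- but the argument is the same.
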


\begin{proof}
Since the elements $v_i$, $i=1,\dots,p$ are quasi-exponential, the linear space
\begin{displaymath}
Y:=\bigoplus_{i=1}^p \Span \left\{ \frac{d^n}{dy^n} v_i \ : \ n \geq 0 \right\} \subset \mathcal{D}(A)
\end{displaymath}
is, by definition, finite-dimensional. Furthermore, we observe that
\begin{equation}\label{derivative condition 001}
\frac{d}{dy}v \in Y, \ v\in Y.
\end{equation}
Set $d:=\dim Y$ and choose $v_{p+1},\dots,v_d\in Y$ such that $\{v_i,\dots,v_d\}$ is a basis of $Y$.

Fix an arbitrary $h_0\in\mathcal{D}(A)$. First, define the function $\theta(t):\mathbf{R}_+\rightarrow H_\alpha$ as
\begin{equation}
\theta(t) := S_th_0 \in \mathcal{D}(A).
\end{equation}
and the function $\kappa\in C^{0,1}(\mathbf{R}_+\times\mathbf{R}^d;\mathbf{R}^d\times\mathbf{R}^n)$ as
\begin{equation}\label{sufficient proposition kappa definition}
\kappa_{ij}(t,z)=
\begin{cases}
\Phi_{ij}\left(\theta(t)+\sum_{k=1}^d z_k v_k\right), & i=1,\dots,p \\
0, & i=p+1,\dots,d.
\end{cases}
\end{equation}
for all $j=1,\dots,n$. We observe from \eqref{sufficient proposition kappa definition} that the latter condition in \eqref{Invariant foliation proposition: condition 2} is satisfied. Furthermore, define the function $\beta\in C^{0,1}(\mathbf{R}_+\times\mathbf{R}^d;\mathbf{R}^d)$ as
\begin{equation}\label{sufficient proposition beta definition}
\beta_i(t,z)=\sum_{j=1}^n\kappa_{ij}(t,z)\psi_j+\sum_{k=1}^d a_{ik}z_k,
\end{equation}
where $a_{ij}$ are chosen, due to \eqref{derivative condition 001}, such that
\begin{equation}\label{Matrix A defining expression}
\frac{d}{dy}v_i = \sum_{j=1}^d  v_j a_{ji} ,
\end{equation}
for all $i=1,\dots,d$. With this specification, we verify by differentiating expression \eqref{sufficient proposition beta definition} with respect to $z_k$ that the elements $v_i$ satisfy the differential equations \eqref{ODEs}. Finally, the definition of $\beta$ coupled with the fact that $\frac{d}{dy}\theta(t)=\theta'(t)={h_0}'(y+t)$ implies that the former condition in \eqref{Invariant foliation proposition: condition 1} is also satisfied. 
\end{proof}

Proposition \ref{Invariance sufficient conditions proposition} and its proof give us not only sufficient conditions for the existence of a finite-dimensional realization but also a recipe for its construction. To illustrate this, we consider the example from Section 6 in \cite{Bj�rk Gombani}.
\begin{example}
Assume the driving Brownian motion in \eqref{P-dynamics 002} is one-dimensional, the volatility functional $\Phi\equiv 1$ and the quasi-exponential function $v_1(x)=x e^{-ax}$ for $a>0$, i.e., that $\sigma=\sigma_1=v_1$. In addition, let for simplicity $\psi=0$. We observe that the space $\mathcal{V}=\Span \{ v_1, v_2 \}$, where $v_2(y)=e^{-ay}$, so the dimension of the affine realization is $2$. It is a matter of differentiation to show (see \eqref{Matrix A defining expression})
\begin{displaymath}
\begin{pmatrix}
v_1' & v_2'
\end{pmatrix}
=
\begin{pmatrix}
v_1 & v_2
\end{pmatrix}
\begin{pmatrix}
-a & 0 \\ 1 & -a
\end{pmatrix}:=v \mathbf{A}.
\end{displaymath}
Thus the state variable $Z$ is given, due to \eqref{sufficient proposition kappa definition} and \eqref{sufficient proposition beta definition}, as the strong solution of the two-dimensional It\^{o} equation
\[ dZ_t = \mathbf{A}Z_tdt+ \mathbf{B}dW_t,  \]
where $\mathbf{B}=(1,0)^\top$ and $W$ is a $\mathbf{Q}$-Brownian motion. Now, given the initial $h_0\in\mathcal{D}(A)$, the affine realization of futures price reads as
\[ f_t(\cdot)=h_0(\cdot+t_0+t)+Z^1_tv_1(\cdot)+Z^2_tv_2(\cdot).  \]
It is worth pointing out that the process $Z$ becomes an Ornstein-Uhlenbeck process with solution given by
$$
Z_t=\exp(\mathbf{A} t)Z_0+\int_0^t\exp(\mathbf{A}(t-s))\mathbf{B}\,dW_s\,.
$$
As the eigenvalue of the matrix $\mathbf{A}$ is $-a$, that is, negative, $Z_t$ becomes stationary. Many popular spot and futures
price dynamics of commodities and energy have stationarity as a crucial property, see \cite{BSBK} for examples in energy. The model in
the current example gives rise to a hump-shaped futures price curve, which is relevant in oil markets, say (see \cite{G}).
\end{example}

\section{Portfolio Optimization Revisited: Finite-dimensional State Variable} In this section we recast the optimization problem \eqref{Merton Problem} into a finite-dimensional setting using the identity \eqref{FDR expression}. Assume that the initial futures price curve can be expressed as $f_0 = S_{t_0} h_0 + \sum_{i=1}^d z_i v_i$ where $h_0\in\mathcal{D}(A)$ and $t_0\in\mathbf{R}_+$. Then the affine futures price curve dynamics is given by
\[ f_t = S_tS_{t_0}h_0 + \sum_{i=1}^d Z_t^i v_i.  \]
Recall that the wealth $X$ is given by the expression \eqref{Wealth General}. To rewrite this for the affine realization, we first observe that
\[ df_t = S_tS_{t_0}\frac{\partial h_0}{\partial y}dt + \sum_{i=1}^d dZ_t^i v_i. \]
Then we can rewrite the wealth as
\begin{equation}\label{Wealth affine dynamics 001}
X^\Gamma_t=\int_0^t\left(rX^\Gamma_s-\left\langle\Gamma(s),\frac{\partial f_s}{\partial y} \right\rangle\right)ds + \int_0^t \left\langle\Gamma(s),\sum_{i=1}^d dZ^i_sv_i  \right\rangle + \int_0^t \left\langle\Gamma(s),S_sS_{t_0}\frac{\partial h_0}{\partial y}\right\rangle ds.
\end{equation}
It is a matter of differentiation to show that
\begin{equation*}
S_sS_{t_0}\frac{\partial h_0}{\partial y}-\frac{\partial f_s}{\partial y} = S_sS_{t_0} \frac{\partial h_0}{\partial y}  - S_sS_{t_0}\frac{\partial h_0}{\partial y} - \sum_{i=1}^d Z^i_s\frac{\partial v_i}{\partial y}= - \sum_{i=1}^d Z^i_s\frac{\partial v_i}{\partial y}.
\end{equation*}
By coupling this with the expression \eqref{Matrix A defining expression}, we observe that the equation \eqref{Wealth affine dynamics 001} can be written as
\begin{equation}\label{Wealth affine dynamics 002}
\begin{split}
X^\Gamma_t &=\int_0^t r X^\Gamma_s ds + \int_0^t \left\langle\Gamma(s), \sum_{i=1}^d dZ^i_sv_i- \sum_{i=1}^d Z^i_s\frac{\partial v_i}{\partial y}ds\right\rangle \\
    &=\int_0^t r X^\Gamma_s ds + \int_0^t \left\langle\Gamma(s), \sum_{i=1}^d dZ^i_sv_i- \sum_{i=1}^d Z^i_s \sum_{j=1}^d v_j a_{ji}ds\right\rangle\\
    &=\int_0^t r X^\Gamma_s ds+\int_0^t \left\langle\Gamma(s), \sum_{i=1}^d v_i\left(dZ^i_s-\sum_{j=1}^d a_{ij}Z^j_sds\right)\right\rangle\\
    &=\int_0^t r X^\Gamma_s ds+\int_0^t \sum_{i=1}^d \gamma^i_t\left(dZ^i_s-\sum_{j=1}^d a_{ij}Z^j_sds\right),
\end{split}
\end{equation}
where the real valued processes $\gamma^i:=\langle \Gamma,v_i \rangle$ satisfy the condition \eqref{Admissibility L_2 condition}. Finally, by using identity \eqref{sufficient proposition beta definition}, we find
\begin{equation}\label{Wealth affine dynamics 004}
\begin{split}
X^\Gamma_t= X^\gamma_t &=\int_0^t r X^\gamma_s ds + \int_0^t \sum_{i=1}^d \gamma^i_s\sum_{j=1}^n\kappa_{ij}(s,Z_s)\left(d\hat{W}^j_s+\psi_jds\right) \\
    &=\int_0^t r X^\gamma_s ds + \int_0^t \gamma_s(dZ_s-\mathbf{A}Z_sds), \\
\end{split}
\end{equation}
where $\hat{W}$ is a $\mathbf{P}$-Wiener process and the matrix $\mathbf{A}$ is given by \eqref{Matrix A defining expression}. By plugging this into the utility maximization problem \eqref{Merton Problem}, we reduce the initial infinite-dimensional control problem into a classical finite-dimensional control problem. We write this problem in the Markovian form
\begin{equation}\label{Finite dimensional control problem}
V(t,x,z)=\sup_{\gamma}\mathbf{E}\left[u(X^\gamma_T) |  X^\gamma_t=x, Z_t=z \right].
\end{equation}
We can use classical control theoretic techniques to study this problem, see, e.g., \cite{Fleming Soner}. To illustrate this, we derive the HJB-equation for this problem. Assuming that the function $V$ is sufficiently smooth, It\^{o}'s formula yields
\begin{equation}
\begin{split}
dV(t,X^\gamma_t,Z_t)=&\left[\vphantom{\frac{1}{2}}V_t(t,X^\gamma_t,Z_t)+ V_x(t,X^\gamma_t,Z_t)\left(rx+\sum_{i=1}^d \gamma_i\sum_{k=1}^n \kappa_{ik}(t,Z_t)\psi_k \right)+\sum_{i=1}^d V_{z_i}(t,X^\gamma_t,Z_t)\beta_i(t,Z_t)\right. \\ &+\left.\frac{1}{2}V_{xx}(t,X^\gamma_t,Z_t)\sum_{i=1}^d\sum_{k=1}^n\gamma_i\kappa_{ij}(t,Z_t)\kappa_{ji}(t,Z_t)\gamma_i
+\frac{1}{2}\sum_{i,j=1}^d V_{z_iz_j}(t,X^\gamma_t,Z_t)\sum_{k=1}^n \kappa_{jk}(t,Z_t)\kappa_{ki}(t,Z_t)\right. \\ &+\left.\sum_{i=1}^d V_{xz_i}(t,X^\gamma_t,Z_t)\sum_{k=1}^n\kappa_{ik}(t,Z_t)\kappa_{ki}(t,Z_t)\gamma_i \right]dt \\ &+ V_x(t,X^\gamma_t,Z_t)\sum_{i=1}^d \gamma_i\sum_{k=1}^n \kappa_{ik}(t,Z_t)d\hat{W}^k_t+\sum_{i=1}^d V_{z_i}(t,X^\gamma_t,Z_t)\sum_{k=1}^n\kappa_{ik}(t,Z_t) d\hat{W}^k_t.
\end{split}
\end{equation}
By a standard martingale argument from stochastic control, see, e.g., \cite{Fleming Soner}, this yields the HJB-equation
\begin{equation}
\begin{split}
\sup_{\gamma}&\left\{\vphantom{\frac{1}{2}}V_t(t,x,z)+ V_x(t,x,z)\left(rx+\sum_{i=1}^d \gamma_i\sum_{k=1}^n \kappa_{ik}(t,z)\psi_k \right)+\sum_{i=1}^d V_{z_i}(t,x,z)\beta_i(t,z)\right. \\ &+\left.\frac{1}{2}V_{xx}(t,x,z)\sum_{i=1}^d\sum_{k=1}^n\gamma_i\kappa_{ij}(t,z)\kappa_{ji}(t,z)\gamma_i
+\frac{1}{2}\sum_{i,j=1}^d V_{z_iz_j}(t,x,z)\sum_{k=1}^n \kappa_{jk}(t,z)\kappa_{ki}(t,z)\right. \\ &+\left.\sum_{i=1}^d V_{xz_i}(t,x,z)\sum_{k=1}^n\kappa_{ik}(t,z)\kappa_{ki}(t,z)\gamma_i\right\}=0,
\end{split}
\end{equation}
where $V(T,x,z)=u(x)$ for all $x,z\in\mathbf{R}$. As a consequence, we can formulate the following verification result

\begin{proposition}\label{verification}
Assume that we have
\begin{itemize}
\item[(1)] a smooth function $\hat{V}:[0,T]\times\mathbf{R}^d\times \mathbf{R}\rightarrow \mathbf{R}$ with $\hat{V}(T,z,x)=u(x)$ such that
\begin{equation*}
\begin{split}
\hat{V}_t(t,x,z)&+ \hat{V}_x(t,x,z)\left(rx+\sum_{i=1}^d \gamma_i\sum_{k=1}^n \kappa_{ik}(t,z)\psi_k \right)+\sum_{i=1}^d V_{z_i}(t,x,z)\beta_i(t,z) \\ &+\frac{1}{2}\hat{V}_{xx}(t,x,z)\sum_{i=1}^d\sum_{k=1}^n\gamma_i\kappa_{ij}(t,z)\kappa_{ji}(t,z)\gamma_i
+\frac{1}{2}\sum_{i,j=1}^d \hat{V}_{z_iz_j}(t,x,z)\sum_{k=1}^n \kappa_{jk}(t,z)\kappa_{ki}(t,z) \\ &+\sum_{i=1}^d \hat{V}_{xz_i}(t,x,z)\sum_{k=1}^n\kappa_{ik}(t,z)\kappa_{ki}(t,z)\gamma_i\leq 0,
\end{split}
\end{equation*}
for all $t,\gamma,x,z$,
\item[(2)] a function $\gamma^*:[0,T]\times \mathbf{R}\times \mathbf{R}^d\rightarrow \mathbf{R}^d$ such that
\begin{equation*}
\begin{split}
\hat{V}_t(t,x,z)&+ \hat{V}_x(t,x,z)\left(rx+\sum_{i=1}^d \gamma^*_i(t,x,z)\sum_{k=1}^n \kappa_{ik}(t,z)\psi_k \right)+\sum_{i=1}^d V_{z_i}(t,x,z)\beta_i(t,z) \\ &+\frac{1}{2}\hat{V}_{xx}(t,x,z)\sum_{i=1}^d\sum_{k=1}^n\gamma^*_i(t,x,z)\kappa_{ij}(t,z)\kappa_{ji}(t,z)\gamma^*_i(t,x,z)
\\&+\frac{1}{2}\sum_{i,j=1}^d \hat{V}_{z_iz_j}(t,x,z)\sum_{k=1}^n \kappa_{jk}(t,z)\kappa_{ki}(t,z) \\ &+\sum_{i=1}^d \hat{V}_{xz_i}(t,x,z)\sum_{k=1}^n\kappa_{ik}(t,z)\kappa_{ki}(t,z)\gamma^*_i(t,x,z)= 0,
\end{split}
\end{equation*}
for all $t,z,x$.
\item[(3)] a wealth process $X^*$ corresponding to an admissible control $\hat{\gamma}$ such that $\hat{\gamma}_t=\gamma^*(t,X^*_t,z)$ for all $t\geq0$ and $z\in\mathbf{R}^d$.
\end{itemize}
Then the function $\hat{V}$ solves the control problem \eqref{Finite dimensional control problem} (i.e. $\hat{V}=V$) and an optimal control is given by $\hat{\gamma}$.
\end{proposition}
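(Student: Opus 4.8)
The plan is to run the classical verification argument from stochastic control, adapted to the finite-dimensional state pair $(X^{\gamma},Z)$ and the wealth dynamics \eqref{Wealth affine dynamics 004}. Fix $(t,x,z)$ and let $\gamma$ be an arbitrary admissible control, with associated wealth $X^{\gamma}$ started from $X^{\gamma}_t=x$ and $Z_t=z$. Since $\hat V$ is smooth, apply It\^o's formula to $s\mapsto \hat V(s,X^{\gamma}_s,Z_s)$ on $[t,T]$. The $ds$-integrand produced is exactly the left-hand side of the inequality in hypothesis (1), evaluated along $(X^{\gamma}_s,Z_s)$ with control $\gamma_s$, hence nonpositive. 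Introduce a localizing sequence of stopping times $\tau_k\uparrow\infty$ that turns the $d\hat W$-integrals into true martingales on $[t,T\wedge\tau_k]$; taking expectations then gives $\mathbf{E}\!\left[\hat V(T\wedge\tau_k,X^{\gamma}_{T\wedge\tau_k},Z_{T\wedge\tau_k})\right]\le \hat V(t,x,z)$. Passing $k\to\infty$ and using the terminal condition $\hat V(T,\cdot,\cdot)=u(\cdot)$ yields $\mathbf{E}[u(X^{\gamma}_T)]\le \hat V(t,x,z)$; taking the supremum over $\gamma$ gives $V(t,x,z)\le\hat V(t,x,z)$.

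Second, repeat the computation along the controlled pair $(X^{*},Z)$ driven by the admissible feedback control $\hat\gamma$ from hypothesis (3). By hypothesis (2) the $ds$-integrand now vanishes identically along this trajectory, so $s\mapsto\hat V(s,X^{*}_s,Z_s)$ is a local martingale on $[t,T]$; the same localization and limiting procedure produces $\mathbf{E}[u(X^{*}_T)]=\hat V(t,x,z)$. Since $\hat\gamma$ is admissible, the right-hand side is dominated by $V(t,x,z)$, so $V(t,x,z)\ge\hat V(t,x,z)$. Combining the two inequalities gives $\hat V=V$ on $[0,T]\times\mathbf{R}\times\mathbf{R}^d$ and shows that $\hat\gamma$ attains the supremum in \eqref{Finite dimensional control problem}, which is the assertion.

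The main obstacle is the passage to the limit in the localization, that is, justifying $\mathbf{E}\!\left[\hat V(T\wedge\tau_k,X^{\gamma}_{T\wedge\tau_k},Z_{T\wedge\tau_k})\right]\to\mathbf{E}[u(X^{\gamma}_T)]$ and the analogous convergence for $X^{*}$. This requires either a uniform-integrability bound on the family $\{\hat V(T\wedge\tau_k,X^{(\cdot)}_{T\wedge\tau_k},Z_{T\wedge\tau_k})\}_k$ or a growth condition tying $\hat V$ and its first derivatives to the admissibility condition \eqref{Admissibility L_2 condition}; in the present generality one must either add such a condition to the hypotheses on $\hat V$ or combine Fatou's lemma on one side with uniform integrability on the other, using the concavity and monotonicity of $u$ to control signs. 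What makes these estimates tractable is that in \eqref{Wealth affine dynamics 004} the wealth is an affine functional of $dZ$ with coefficients built from the bounded maps $\kappa_{ij}$ (the boundedness being inherited from the hypotheses on the $\Phi_{ij}$), so the quadratic-variation terms entering the It\^o expansion are controlled by the square-integrability of $\gamma$.

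Finally, one should note that the $z$-derivative terms in hypotheses (1) and (2) are to be read as $\hat V_{z_i}$ and $\hat V_{z_iz_j}$ throughout (the symbols $V_{z_i}$ there being a transcription slip), and that in hypothesis (3) the feedback is applied along the running state, $\hat\gamma_t=\gamma^{*}(t,X^{*}_t,Z_t)$; with these readings the argument above goes through verbatim.
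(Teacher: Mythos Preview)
Your argument is the same classical verification-by-It\^o approach the paper uses: apply It\^o to $\hat V(s,X^{\gamma}_s,Z_s)$, use hypothesis~(1) to make the drift nonpositive, take expectations to get $\hat V\ge V$, then repeat with $\hat\gamma$ and hypothesis~(2) to get equality. The only difference is that the paper skips localization entirely and simply asserts that admissibility of $\gamma$ makes the $d\hat W$-integral a true martingale, whereas you localize and honestly flag the uniform-integrability issue in the passage to the limit; your treatment is the more careful one, and your remarks about the typos ($V_{z_i}\to\hat V_{z_i}$ and the feedback reading $\gamma^*(t,X^*_t,Z_t)$) are correct.
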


\begin{proof}
Let $\gamma$ be an admissible control and $X^\gamma$ the associated wealth process. Then It\^{o}'s formula yields
\begin{equation}
\begin{split}
d\hat{V}(t,X^\gamma_t,Z_t)=&\left[\vphantom{\frac{1}{2}}\hat{V}_t(t,X^\gamma_t,Z_t)+ \hat{V}_x(t,X^\gamma_t,Z_t)\left(rx+\sum_{i=1}^d \gamma_i(t)\sum_{k=1}^n \kappa_{ik}(t,Z_t)\psi_k \right)+\sum_{i=1}^d \hat{V}_{z_i}(t,X^\gamma_t,Z_t)\beta_i(t,Z_t)\right. \\ &+\left.\frac{1}{2}\hat{V}_{xx}(t,X^\gamma_t,Z_t)\sum_{i=1}^d\sum_{k=1}^n\gamma_i(t)\kappa_{ij}(t,Z_t)\kappa_{ji}(t,Z_t)\gamma_i(t)
+\frac{1}{2}\sum_{i,j=1}^d \hat{V}_{z_iz_j}(t,X^\gamma_t,Z_t)\sum_{k=1}^n \kappa_{jk}(t,Z_t)\kappa_{ki}(t,Z_t)\right. \\ &+\left.\sum_{i=1}^d \hat{V}_{xz_i}(t,X^\gamma_t,Z_t)\sum_{k=1}^n\kappa_{ik}(t,Z_t)\kappa_{ki}(t,Z_t)\gamma_i(t) \right]dt \\ &+ \hat{V}_x(t,X^\gamma_t,Z_t)\sum_{i=1}^d \gamma_i(t)\sum_{k=1}^n \kappa_{ik}(t,Z_t)d\hat{W}^k_t+\sum_{i=1}^d \hat{V}_{z_i}(t,X^\gamma_t,Z_t)\sum_{k=1}^n\kappa_{ik}(t,Z_t) d\hat{W}^k_t.
\end{split}
\end{equation}
Since the term on the right hand side inside the angled brackets is negative, we obtain
\begin{equation*}
\begin{split}
u(X^\gamma_T)=\hat{V}(T,X^\gamma_T,Z_T)\leq \hat{V}(t,X^\gamma_t,Z_t)+ \int_t^T &\left(\hat{V}_x(s,X^\gamma_s,Z_s)\sum_{i=1}^d \gamma_i(s)\sum_{k=1}^n \kappa_{ik}(s,Z_s)d\hat{W}^k_s\right.\\&+\left.\sum_{i=1}^d \hat{V}_{z_i}(s,X^\gamma_s,Z_s)\sum_{k=1}^n\kappa_{ik}(s,Z_s) d\hat{W}^k_s\right).
\end{split}
\end{equation*}
Since $\gamma$ is admissible, the stochastic integral is a true martingale and, consequently, $\mathbf{E}[u(X^\gamma_T)|X^\gamma_t=x,Z_t=z]\leq\hat{V}(t,X^\gamma_t,Z_t)$.
This shows that $\hat{V}$ dominates the solution $V$. To show that $\hat{V}\leq V$ and that $\hat{\gamma}$ gives an optimal control, we use the same argument coupled with the conditions (2) and (3).
\end{proof}

Proposition \ref{verification} gives a set of sufficient conditions for a given smooth function to coincide with the value of the control problem \eqref{Finite dimensional control problem}. Moreover, it gives an identification of an optimal control. Even though the control $\gamma$ does not appear to have an direct economical interpretation, it can be traced back to, for example, to benchmark futures prices following Example \ref{Benchmark example}. In this example, the $d$-dimensional coordinate process $Z$ was identified with an affine transform of $d$ benchmark futures prices. On the other hand, as we observe from the equation \eqref{Wealth affine dynamics 004}, the process $\gamma$ can be understood as "portfolio weights" on $d$-dimensional diffusion dynamics $t\mapsto \int_0^t \kappa(s,Z_s)\left(d\hat{W}_s+\psi ds\right)$. These dynamics can extracted from the evolution of the coordinate process $Z$.

We end our paper with discussing a futures price dynamics typical for energy markets. Suppose that the dynamics is given as
\begin{equation}
\label{GS-futures}
df_t(x)=\left(Af_t(x)+\sigma_1\psi_1+\sigma_2 e^{-a x}\psi_2\right)\,dt+\sigma_1\,d\hat{W}_t^1+\sigma_2 e^{-a x}\,d\hat{W}_t^2,
\end{equation}
for a two-dimensional Brownian motion $\hat{W}$ and $\sigma_1, \sigma_2, a$ positive constants. For simplicity, we also assume the market price of risk $\psi_i, \ i=1,2$ to be constants.  In this model, we identify $\Phi_{12}=\Phi_{21}=0$, $\Phi_{11}=\sigma_1$ and $\Phi_{22}=\sigma_2$. Furthermore, $v_1(x)=1$ and $v_2(x)=\exp(-a x)$, which obviously are quasi-exponential functions spanning a linear space of dimension 2. As $v_1'(x)=0$ and $v_2'(x)=-a v_2(x)$, we find the dynamics of the two-dimensional process $Z$ to be
$$
dZ_t=\left(\Psi+\mathbf{A}Z_t\,\right)dt+\mathbf{B}\,d\hat{W}_t,
$$
with $\Psi=(\sigma_1\psi_1,\sigma_2\psi_2)'$ and
$$
\mathbf{A}=\left[\begin{array}{cc} 0 & 0 \\ 0 & -a\end{array}\right],\qquad
\mathbf{B}=\left[\begin{array}{cc} \sigma_1 & 0 \\ 0 & \sigma_2\end{array}\right].
$$
We conclude by the analysis above on finite dimensional realizations that
\begin{equation}
\label{GS-expl}
f_t(x)=h_0(x+t_0+t)+Z_1^1+e^{-a x}Z_t^2,
\end{equation}
for some $t_0>0$. By splitting into the two components of $Z$, we find that $Z^1$ is a drifted Brownian motion
$$
dZ_t^2=\sigma_1\psi_1\,dt+\sigma_1\,d\hat{W}_t^1,
$$
and $Z^2_t$ is a mean-reverting process,
$$
dZ_t^2=(\sigma_2\psi_2-a Z_t^2)\,dt+\sigma_2\,d\hat{W}_t^2.
$$
Optimizing a portfolio invested in the futures price curve will with this model be equivalent to optimizing a portfolio investment in two "assets" with dynamics $Z$. As $\exp(-ax)$ tends to zero when $x$ tends to infinity, we find that $f_t(x)\sim h_0(x+t_0+t)+Z_t^1$ for large values of $x$. Hence, an investment in $Z^1$ can be viewed as holding a portfolio position in a futures with long time to maturity, that is, a position in a contract in the far end of the futures curve. The "asset" $Z^2$ can then be interpreted as the difference between a futures far out on the curve (being $Z^1$) and one with short time to maturity. Our investment problem will therefore be to select optimally a portfolio of contracts in the short and long end of the curve.

The model in \eqref{GS-futures}, or \eqref{GS-expl}, can be viewed as the implied futures price dynamics from a two-factor spot model. In fact, following \cite{GS}, we can assume that the spot price of some commodity is given by
$$
S_t=\Lambda(t)+Z_t^1+Z_t^2,
$$
where $\Lambda(t)$ is some deterministic seasonality function. This spot price model will be an arithmetic analogue of the dynamics proposed for oil by \cite{GS}. Here, $Z^2$ is interpreted as the short term variations of the oil spot price, while $Z^1$, the non-stationary part, is the long-term trends in oil prices including inflation and extinction of reserves. This corresponds to the view of investing in the long and short end of the futures curve.

\subsection*{Acknowledgements}
Financial support from the project "Energy markets: modelling, optimization and simulation (EMMOS)", funded by the Norwegian Research Council under grant 205328 is gratefully acknowledged.

\end{document}